\newtheorem{proposition}{Proposition}
\begin{document}

\title{Estimation de ligne de base de capteurs d'humectation : intégration et minimum locaux à différentes échelles}
\author[1]{Jean-Yves Baudais}
\author[2]{Melen Leclerc}
\author[2]{Christophe Langrume}
\affil[1]{IETR, UMR CNRS 6164, Beaulieu, Rennes, France}
\affil[2]{IGEPP, UMR 1349 INRAE, Domaine de la Motte, Le Rheu, France}
\date{}
\maketitle
\noindent \small jean-yves.baudais@insa-rennes.fr, \{melen.leclerc,christophe.langrume\}@inrae.fr

\renewcommand{\abstractname}{Résumé}
\begin{abstract}  
Les capteurs diélectrique d'humectation sont utilisés en agriculture pour détecter la présence d'eau déposée sur le feuillage et prédire le risque de développement de maladies. Le signal électrique mesuré présente une dérive du niveau de base qui vient biaiser les alertes. Nous proposons une méthode d'estimation de cette ligne de base exploitant la norme L1 et sélectionnant des minimums locaux à une échelle d'observation. Les performances de l'estimateur sont évaluées sur des données simulées et comparées à celles des estimateurs de la littérature.
\end{abstract}

\renewcommand{\abstractname}{Abstract}
\begin{abstract} 
Dielectric wetness sensors are used in agriculture to detect the presence of water on foliage and to predict the risk of disease development. The measured electrical signal has a base level drift that skews the alerts. We propose a method for estimating this baseline using L1 and selecting local minimums at an observation scale. The performance of the estimator is evaluated on simulated data and compared to the literature estimators.
\end{abstract}

\section{Introduction}

L'humectation foliaire, c.-à-d. l'eau présente sur les feuilles, est une donnée importante pour prédire le développement de nombreuses maladies des plantes \cite{MAGA05}. Elle mesurée \emph{in situ} avec des capteurs diélectriques, le plus utilisé étant le capteur LWS \cite{LWS16}. Ce capteur capacitif reproduit grossièrement la forme et les propriétés thermodynamiques, radiatives et hydrophobes d'une feuille. Il mesure la constante diélectrique qui dépend de la quantité d'eau présente sur le capteur \cite{THIE18}. Le signal produit correspond à une succession d'excitations indiquant la présence d'eau sur le capteur (p.~ex. rosée, pluie, gel). Le traitement de ce signal consiste généralement à extraire par seuillage les périodes sèches et humides qui sont ensuite utilisées pour des recherches en épidémiologie végétale ou pour réaliser des alertes, en temps réel, afin de limiter le développement de maladies sur les cultures. Néanmoins, lorsque les capteurs sont laissés pendant de longues périodes en extérieur, une dérive lente du niveau de base du signal apparaît et vient biaiser la détection de l'humectation.

L'estimation et la correction de la ligne de base est une étape de pré-traitement essentielle pour analyser, quantifier et interpréter correctement des mesures physiques, chimiques ou biologiques \cite{BERT06}. De nombreuses méthodes sont proposées pour réaliser cette estimation de manière quasi-automatique une fois l'acquisition réalisée \cite{PEAR77, DIET91}. Elles s'appuient sur l'estimation des pics \cite{DUVA15}, la régression quantile \cite{KOMS11}, la décomposition en ondelettes \cite{QIAN17}, l'ajustement de polynômes \cite{MAZE05} ou des moindres carrés penalisés ou pondérés \cite{ZHAN10}. Ces méthodes nécessitent soit l'ensemble des données, soit l'optimisation d'hyperparamètres pour estimer le niveau de base et corriger le signal. Nous proposons ici une méthode estimant la ligne de base en temps réel et nécessitant une faible profondeur de mémoire. La méthode proposée offre la possibilité d'un traitement \emph{in situ}, c.-à-d. dans un noeud de capteur, afin d'améliorer les outils d'aide à la décision en agriculture pour la lutte contre les maladies et le gel.

Le reste de l'article est organisé de la façon suivante. Le paragraphe~\ref{sec_mm} présente le modèle d'humectation utilisé et la méthode d'estimation de la ligne de base. Le paragraphe~\ref{sec_sr} donne quelques résultats de simulation, quantifie les performances de la méthode proposée et les compare aux performances d'algorithmes de la littérature. Le paragraphe~\ref{sec_conc} conclut l'article.

\section{Modèle et méthode}
\label{sec_mm}

La figure~\ref{fig_data} donne un exemple de 9858 points de mesure d'humectation dans un couvert de pois, obtenus de mars à juin 2014. La préconisation du constructeur des capteurs est de fixer la ligne de base à 274~mV pour une tension d'excitation de 2,5~V, ou d'adapter empiriquement ce seuil à chaque capteur. Aucune des deux méthodes ne donnent de résultats pertinents pour ensuite positionner le seuil d'humectation. Par exemple, les données complètes obtenues avec plusieurs capteurs donnent des valeurs minimales sur les cinq premiers jours allant de 260,5 à 264,8~mV, des valeurs bien en deçà de la ligne de base préconisée.

\begin{figure}
\centering
\includegraphics[width=1\columnwidth]{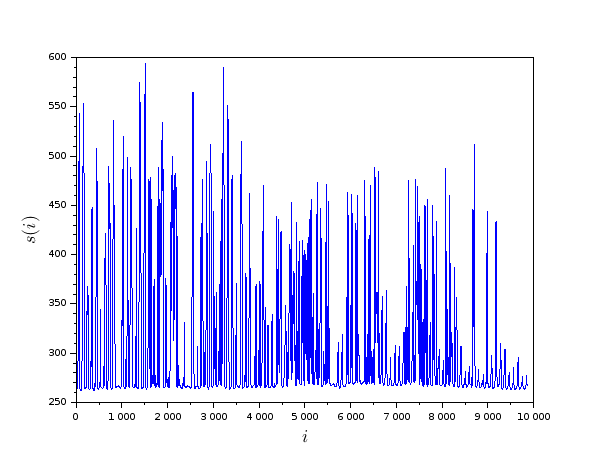}
\caption{Mesure expérimentale d'humectation acquises dans un couvert de pois, amplitude en mV en fonction de l'indice de la mesure.}
\label{fig_data}
\end{figure}

\paragraph{Hypothèse et caractéristique du signal}

L'hypothèse d'additivité des composantes du signal en sortie du capteur conduit à
\begin{align}
s(t)=h(t)+b(t)+n(t)\,,
\end{align}
où $s(t)$ est la réponse du capteur, $h(t)$ est le signal d'humectation à proprement parler, $b(t)$ est la ligne de base et $n(t)$ le bruit électronique ou électromagnétique de l'environnement. Les mesures sont échantillonnées et moyennées sur 15 minutes, d'où un bruit faible et entre 8500 et 10000 échantillons par capteur sur une campagne de mesure. Le modèle discret est alors, $i\in[1,N]\subset\mathbb{N}$
\begin{align}
s(i)=h(i)+b(i)+n(i)\,.
\end{align}
Il n'y pas de fonction ou fonctionnelle à priori modélisant la ligne de base et qui pourrait être exploitée par l'estimateur. Cependant, en plus de l'hypothèse d'additivité, la ligne de base a des amplitudes globalement plus faibles, des variations globalement plus lentes que les amplitudes et variations du signal d'humectation propre, sans pour autant présenter de valeurs simplement séparables. L'estimation de la ligne de base est alors faite sans hypothèse supplémentaire sur $h(i)$ ni sur $b(i)$.

\paragraph{Modèle}

Pour éprouver la méthode d'estimation et évaluer ses performances, un modèle simple est proposé où
\begin{enumerate}
\item $h(i)$ est un mélange de gaussiennes obtenu à partir de la répartition empirique de l'amplitude des pics, de leur largeur et de la distance entre les pics. La figure~\ref{fig_mode} illustre un exemple de comparaisons entre les mesures expérimentales et le modèle ;
\item $b(i)$ est une marche aléatoire de pas $\pm0{,}1$ lissée par moyenne glissante,
\item $n(i)$ est un processus gaussien i.i.d. centré.
\end{enumerate}
Une meilleure modélisation du signal d'humectation serait d'utiliser des gaussiennes généralisées avec différents paramètres de forme permettant de prendre en compte l'asymétrie des lobes ou des fronts plus ou  moins abruptes. Cependant, et dans la mesure où il n'y a pas d'estimation des paramètres du modèle, la forme des fonctions utilisées n'est pas un point clé de notre étude.

\begin{figure}
\centering
\includegraphics[width=1\columnwidth]{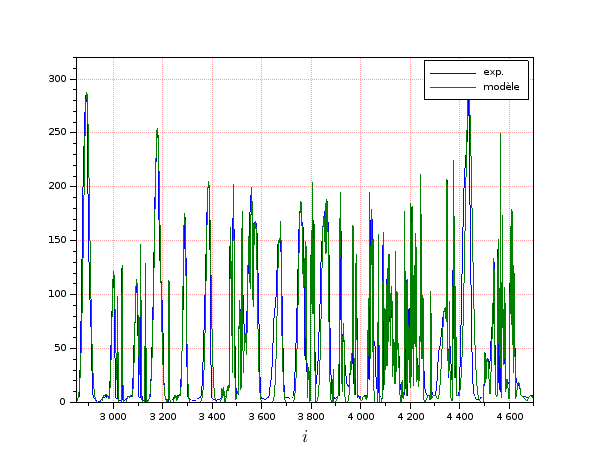}
\caption{Données expérimentales $s(i)$ normalisées et modèle d'humectation $h(i)$, amplitude en mV en fonction de l'indice de la mesure.}
\label{fig_mode}
\end{figure}

\paragraph{Mesure}

Le seuil d'humectation est un paramètre qui dépendant du phénomène biologique à surveiller. Quelle que soit sa valeur, il doit suivre la ligne de base afin de séparer au plus juste les phases sèches des phases humides. Une phase est humide si l'amplitude $s(i)$ du signal est au-dessus de ce seuil $S$, elle est sèche sinon. En appelant $\hat b(i)$ la ligne de base estimée, la comparaison du signal $s(i)$ au seuil théorique $b(i)+S$ et au seuil estimé $\hat b(i)+S$ permet de définir les fausses alarmes (FA), sous entendu vis-à-vis de la phase humide,
\begin{align}\label{eq_fa}
\mathcal{I}_\text{FA}=\big\{i|\hat b(i)+S<s(i)\leq b(i)+S\big\}
\end{align}
et les détections manquées (MD)
\begin{align}\label{eq_md}
\mathcal{I}_\text{MD}=\big\{i|b(i)+S<s(i)\leq \hat b(i)+S\big\}\,.
\end{align}
La durée d'humectation étant une donnée importante d'un point de vue biologique, la distance $d(i)$ à la phase humide est mesurée et définie par
\begin{align}\label{eq_d}
\forall i\in\mathcal{I}_\text{FA}\cup\mathcal{I}_\text{MD},\ d(i)=\min_{\big\{j\big|s(j)>b(j)\wedge s(j)>\hat b(j)\big\}}|i-j|\,.
\end{align}
Le modèle permet de donner des probabilités ou densités de probabilité de FA, MD et $d(i)$, en plus de la classique erreur quadratique moyenne (EQM). Dans le cas des données de terrain, le seuil fixe $S$ et le seuil adaptatif $\hat b(i)+S$ sont utilisés à la place des seuils $\hat b(i)+S$ et $b(i)+S$.

\paragraph{Méthode d'estimation}

L'estimation de la ligne de base est faite à partir de $s(t)$. Elle est d'autant plus fiable au point $t$ que $h(t)$ atteint une valeur nulle, qui est un minimum de la fonction, avec des variations au voisinage de $t$ suffisamment faibles. Ce voisinage correspond alors à une période sèche. Cette observation permet de sélectionner un ensemble de points favorables pour être des points de la ligne de base et ensuite de les interpoler pour reconstruire la ligne de base estimée.

Plutôt que d'utiliser la dérivée, une propriété locale de la fonction sensible au bruit, on minimise une fonction coût $C_T(t_0)$ exploitant l'intégrale à différentes échelles d'observation $T$
\begin{align}
C_T(t_0)=\frac{1}{2T}\int_{t_0-T}^{t_0+T}\phi(s(\tau)-s(t_0))d\tau\,,
\end{align}
où $t_0$ est un minimum local à l'échelle observée
\begin{align}\label{eq_t0}
t_0\ |\ s(t_0)=\min_{\tau\in[-T,T]}s(t_0+\tau)\,.
\end{align}
Les méthodes itératives cherchent à minimiser l'influence des pics via la fonction $\phi$, en exploitant des coûts différents des coûts quadratiques \cite{MAZE05}. Dans notre cas, au contraire, les pics sont utilisés pour maximiser le coût et rejeter ainsi les minimums locaux en période humide, peu fiables pour estimer $b(t)$. Nous avons choisi $\phi(x)=|x|$ qui donne des résultats satisfaisants.

Il est immédiat que l'estimateur ne peut pas suivre les variations de la ligne de base si le signal $h(t)$ est nul pour tout $t$. Dans ce cas, seuls les minimums locaux pourront correctement être estimés. En pratique, ce cas n'existe pas car il y a les phénomènes de rosée nocturne qui conduisent nécessairement à des valeurs de $h(t)$ non nulles.

\begin{proposition}\label{theo_prop}
Si $C_T(t_0)<\epsilon$ alors $s(t_0)$ et un bon candidat pour l'estimation de $b(t_0)$.
\end{proposition}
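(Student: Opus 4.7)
Le plan est de majorer l'écart $|s(t_0)-b(t_0)|=|h(t_0)+n(t_0)|$ en fonction de $C_T(t_0)$ et des paramètres du modèle, puis d'exploiter la petitesse de $C_T(t_0)$. Je commence par utiliser \eqref{eq_t0} : sur $[t_0-T,t_0+T]$ on a $s(\tau)\geq s(t_0)$, donc $\phi(s(\tau)-s(t_0))=s(\tau)-s(t_0)$ et
\begin{align}
C_T(t_0)=\bar s_T-s(t_0)\,,
\end{align}
où $\bar s_T$ est la moyenne de $s$ sur la fenêtre. Par additivité du modèle, $\bar s_T=\bar h_T+\bar b_T+\bar n_T$.

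J'invoque ensuite les hypothèses du modèle. Le bruit $n$ étant gaussien centré et i.i.d., $\bar n_T\approx 0$ par loi des grands nombres appliquée à la fenêtre de largeur $2T$. La ligne de base $b$ étant une marche lissée à variation lente devant $T$, $\bar b_T\approx b(t_0)$ à une erreur contrôlée par l'oscillation de $b$ sur $[t_0-T,t_0+T]$. Ces deux approximations injectées dans l'identité précédente donnent $C_T(t_0)\approx \bar h_T-h(t_0)-n(t_0)$. La positivité $h\geq 0$ et la minimalité de $s(t_0)$ imposent alors que $\bar h_T-h(t_0)$ reste faible : cela force $h$ à rester proche de $h(t_0)$ sur toute la fenêtre. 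Comme $h$ s'annule en période sèche, soit $t_0$ se trouve dans une telle période et $h(t_0)\approx 0$, soit aucun pic n'est présent sur $[t_0-T,t_0+T]$, auquel cas $h$ y est uniformément proche de $0$. Dans les deux cas, $s(t_0)\approx b(t_0)$, ce qui justifie le statut de candidat admissible.

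Le point délicat sera de calibrer $T$ pour assurer simultanément $\bar n_T\approx 0$ (nécessitant $T$ suffisamment grand, et couvrant plus d'une excitation en régime humide afin qu'un creux entre deux pics ne soit pas faussement retenu) et $\bar b_T\approx b(t_0)$ (nécessitant $T$ petit devant l'échelle de dérive de la ligne de base). Le caractère qualitatif de l'énoncé, où \emph{bon candidat} n'est pas quantifié, autorise une majoration à $O(\epsilon)$ près sans chercher à optimiser ce compromis, lequel sera en revanche calibré numériquement au paragraphe~\ref{sec_sr}.
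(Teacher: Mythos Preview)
Votre argument est correct au même niveau heuristique que celui du papier, mais il emprunte une voie réellement différente. Le papier procède par inégalité triangulaire, en majorant $C_T(t_0)\leq a_h(t_0)+a_b(t_0)+a_n(t_0)$ avec $a_x(t_0)=\frac{1}{2T}\int|x(t)-x(t_0)|\,dt$, puis argumente que chacun des trois termes est petit \emph{en période sèche} ; la structure logique est donc plutôt \og période sèche $\Rightarrow$ $C_T$ petit \emph{et} $s(t_0)\approx b(t_0)$\fg{}, c.-à-d.\ l'exhibition d'une cause commune, et non l'implication directe annoncée. Vous, au contraire, exploitez la minimalité~\eqref{eq_t0} pour supprimer la valeur absolue et obtenir l'identité exacte $C_T(t_0)=\bar s_T-s(t_0)$, ce qui vous permet de travailler dans le sens direct : petitesse de $C_T$ $\Rightarrow$ petitesse de $\bar h_T-h(t_0)$ aux termes de bruit et de dérive près $\Rightarrow$ $h$ quasi constant sur la fenêtre $\Rightarrow$ $h(t_0)\approx 0$ par la structure du modèle. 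Ce que votre approche gagne : une justification alignée avec le sens de l'implication énoncée, et une identité plutôt qu'une majoration. Ce que l'approche du papier gagne : la décomposition triangulaire ne requiert pas que $t_0$ réalise exactement le minimum de $s$ et s'étend plus naturellement à d'autres choix de $\phi$. Un point que vous effleurez et que le papier ne traite pas mieux : le terme $n(t_0)$, non moyenné, reste d'ordre $\sigma_n$ et borne inférieurement l'erreur résiduelle $|s(t_0)-b(t_0)|$ ; les deux arguments demeurent donc, comme le papier le reconnaît explicitement, des justifications heuristiques et non des preuves formelles.
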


\begin{proof}
La formulation de la proposition est probabiliste mais il n'y a pas de preuve formelle de l'implication dans la mesure où il est toujours possible de construire des fonctions et des points qui ne vérifient pas la proposition. Cependant, $\forall t_0$~\eqref{eq_t0}
\begin{align}
C_T(t_0)
&\leq\frac{1}{2T}\int|h(t)-h(t_0)|dt+\frac{1}{2T}\int|b(t)-b(t_0)|dt\notag\\&\qquad+\frac{1}{2T}\int|n(t)-n(t_0)|dt\notag\\
&\leq a_h(t_0)+a_b(t_0)+a_n(t_0)\,.
\end{align}
Ainsi,
\begin{itemize}
\item Sur une période sèche, $a_h(t_0)<\epsilon_h(t_0)$ car $h(t)\approx h(t_0)\approx0$. À l'opposé et autour d'un minimum local $s(t_0)$, les variations de $h(t)$ sur une période humide sont plus importantes ;
\item La ligne de base ayant des variations faibles, $b(t)\approx b(t_0)$ sur une fenêtre $2T$ pas trop grande, d'où $a_b(t_0)<\epsilon_b$ ;
\item Enfin, l'intégration permet de réduire l'influence du bruit et $a_n(t_0)<\epsilon_n$ pour $T$ suffisant.
\end{itemize}
Dans ces conditions, $C_T(t_0)<\epsilon_h(t_0)+\epsilon_b+\epsilon_n=\epsilon$ et $s(t_0)\approx b(t_0)$.
\end{proof}

La proposition~\ref{theo_prop} donne une méthode simple de sélection des points de $s(t)$ favorables à l'estimation de la ligne de base. L'estimation de la ligne de base se termine par une interpolation des points sélectionnés pour construire $\hat b(t)$.

\section{Simulation et résultats}
\label{sec_sr}

\paragraph{Performances de l'estimateur}

\begin{figure}
\centering
\includegraphics[width=1\columnwidth]{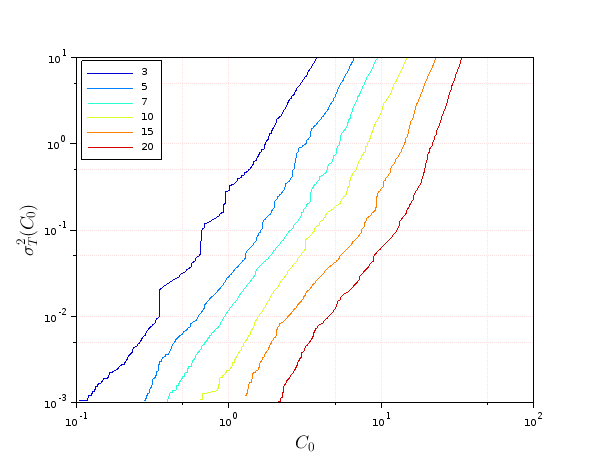}
\caption{EQM $\sigma^2_T(C_0)$ en fonction du seuil $C_0$ et de l'échelle d'observation $T\in\{3,5,7,10,15,20\}$.}
\label{fig_eqm}
\end{figure}

Le modèle présenté au paragraphe~\ref{sec_mm} est utilisé pour quantifier les performances de l'estimateur.  Nous commençons par valider la proposition~\ref{theo_prop} avec une EQM $\sigma^2_T(C_0)$ calculée uniquement sur les points $t_0$ \eqref{eq_t0} sélectionnés à l'échelle $T$ et pour un seuil $C_0$
\begin{align}
\sigma^2_T(C_0)=\frac{1}{|\mathcal{T}|}\sum_{t_0\in\mathcal{T}}|\hat b(t_0)-b(t_0)|^2\,,
\end{align}
avec
\begin{align}\label{eq_tco}
\mathcal{T}=\big\{t_0\ |\ C_T(t_0)<C_0\big\}
\end{align}
et $\hat b(t_0)=s(t_0)$. Ainsi, les points $t_0$ sont d'autant plus favorables que l'EQM $\sigma_T^2(C_0)$ est faible. La proposition~\ref{theo_prop} et \eqref{eq_tco} nous disent que $C_0$ doit également être faible.

La figure~\ref{fig_eqm} donne l'évolution de cette EQM en fonction du seuil $C_0$ pour plusieurs valeurs de l'échelle d'observation $T$. On observe bien une EQM qui décroît avec $C_0$. Ces résultats ne permettent cependant pas de sélectionner un seuil car 
\begin{itemize}
\item[1.] Les EQM, si elles sont faibles, sont des valeurs moyennes qui présentent de fortes dispersions. La proposition~\ref{theo_prop} n'est alors vérifiée qu'en moyenne ;
\item[2.] Le nombre de points favorables diminue avec $C_0$ et conduit à une impossibilité à suivre correctement la ligne de base, si ce nombre est trop faible ou  si la répartition des points n'est pas satisfaisante.
\end{itemize}
À noter que, du fait de l'échantillonnage irrégulier généré par la sélection des points $t_0$, une interpolation spline cubique conduit à des sur-oscillations limitant les performances de l'estimateur lorsque l'EQM est calculée sur l'ensemble des points de la ligne de base. L'interpolation linéaire est alors privilégiée.

\begin{table}
\centering
\begin{tabular}{@{}c|c@{$\quad$}cc@{$\quad$}ccccc@{}}
 &\multicolumn{2}{c}{$|\mathcal{T}|$}&\multicolumn{2}{c}{EQM}&\multicolumn{2}{@{}c@{}}{$P_\text{FA}$  ($10^{-3}$)}&\multicolumn{2}{@{}c@{}}{$P_\text{MD}$ ($10^{-3}$)}\\
 $T\,\big\backslash\,C_0$&0,1&1&0,1&1&0,1&1&0,1&1\\\hline
 3 &308&352&0,025&0,46 &1   &0,6&0,6&3,4\\
 5 &163&200&0,037&0,071&1,5 &0,9&0,7&1,3\\
 7 & 96&129&0,068&0,042&1,9 &1,3&0,9&0,9\\
 10& 47& 75&0,176&0,061&3   &1,8&1,6&0,8\\
 15& 17& 37&0,645&0,168&6,1 &2,9&3  &1,4\\
 20&  6& 19&2,02 &0,414&15,4&4,7&4,2&2,3\\\hline
\end{tabular}
\caption{Performances de l'estimateur.}
\label{tab_hatb}
\end{table}

Le tableau~\ref{tab_hatb} donne quelques valeurs moyennes du nombre de point sélectionnés, de l'EQM calculée sur tous les points de la ligne de base, des taux FA et MD pour plusieurs échelles d'observation et valeurs $C_0$. Contrairement aux résultats de la figure~\ref{fig_eqm}, il n'y a plus une simple décroissance de l'EQM, ou des $P_\text{FA}$ et $P_\text{MD}$, avec l'augmentation de l'échelle d'observation pour une valeur de $C_0$ donnée. Des valeurs de $T$ entre 5 et 10, voire 15, pour $C_0$ égale à 0,1 ou 1 donnent des EQM et des taux FA et MD comparables.

Les performances de la méthode proposée sont ensuite comparées à celles obtenues avec les méthodes BEADS ($d=1$, $f_c=0,03037$, $r=9,62$) \cite{DUVA15}, airPLS ($\lambda=125577$) \cite{ZHAN10} et de régression quantile \cite{KOMS11} développées pour estimer les lignes de bases sur des spectres de spectroscopie RMN ou chromatographie. Les valeurs données dans le tableau~\ref{tab_eqm} sont les EQM calculées sur 55 réponses $s(i)$ de 9858 échantillons, $i\in[1,9858]$. La puissance du bruit est $E[n(i)^2]=0{,}01$, $C_0=1$ et $T=5$. Selon cette EQM, la méthode proposée paraît bien adaptée pour traiter le problème de dérive de la ligne de base des capteurs d'humectation. Sur ces données simulées, elle est la plus performante devant BEADS, airPLS puis la régression polynomiale sur quantile, avec des écarts d'EQM allant de 0,009 à 1,73, cf. tableau~\ref{tab_eqm}.

\begin{table}
\centering
\begin{tabular}{@{}cccc@{}}
 BEADS & airPLS & régression & proposition\\\hline
 0,080 & 0,97 & 1,8 & 0,071\\\hline
\end{tabular}
\caption{EQM des estimateurs de ligne de base.}
\label{tab_eqm}
\end{table}

\paragraph{Mesures expérimentales}

Si la réduction du bruit est suffisante en moyennant les données sur 15 minutes, il reste des irrégularités locales à lisser. Une simple moyenne glissante est préférée au classique algorithme de Savitzky-Golay qui, en étalant les fronts abruptes, introduit des irrégularités supplémentaires en relation avec les coefficients négatifs du filtre et les fortes variations d'amplitude.  Un filtre de longueur 10 est utilisé. L'ensemble de la procédure d'estimation est alors la suivante
\begin{itemize}
\item[1.] Filtrage par moyenne glissante des données ;
\item[2.] Sélection des échantillons $\mathcal{T}$ \eqref{eq_tco};
\item[3.] Reconstruction de la ligne de base par interpolation linéaire des échantillons sélectionnés.
\end{itemize}

En prenant les configurations du tableau~\ref{tab_hatb} les plus favorables, on observe peu de différence sur les taux de FA, MD et sur la répartition $d(i)$, \eqref{eq_d}.  La figure~\ref{fig_famd} donne un exemple de taux de répartition de $d(i)$, pour $S=5$ sur un relevé de 8833 points de mesures, $C_0=1$ et $T=10$. L'utilisation de la méthode proposée, comparée à un seuil fixe, conduit à environ 10~\% de points de mesure humide en moins, soit 238 heures d'humectation en moins. 

\begin{figure}
\centering
\includegraphics[width=1\columnwidth]{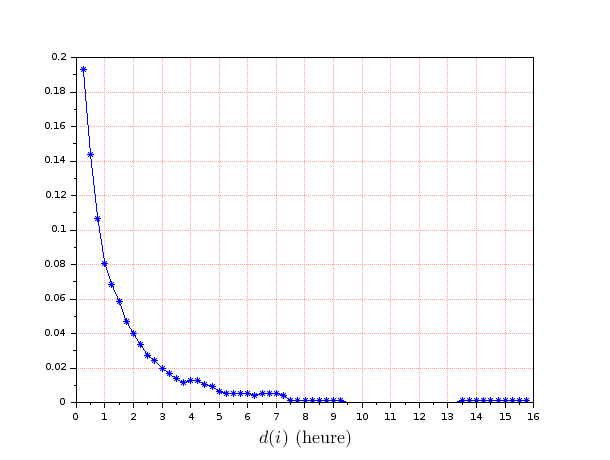}
\caption{Distribution de la distance à la phase humide $d(i)$.}
\label{fig_famd}
\end{figure}

\section{Conclusion}
\label{sec_conc}

La ligne de base des capteurs d'humectation foliaires
varie en environnement opérationnel et biaise la mesure des périodes sèches et humides. Dans ce travail, nous avons proposé une méthode d'estimation de cette ligne de base qui sélectionne, sur la base de la norme L1, des minimums locaux et une échelle d'observation. L'estimateur est validé sur un modèle de ligne de base et ses performances sont comparées à celle des algorithmes de la littérature. La mise en pratique de l'estimateur nécessite de fixer deux paramètres : un seuil et une échelle d'observation. L'analyse menée suggère une faible sensibilité de l'estimateur au choix de ces paramètres. De plus, la très faible complexité de l'algorithme et la faible profondeur de mémoire nécessaire permettent d'envisager des traitements \emph{in situ} en temps réel dans les microcontrôleurs intégrés dans les noeuds des capteurs déployés pour la surveillance des cultures. Toutefois, les performances de l'estimateur peuvent être dépendantes du modèle. Il est alors nécessaire de construire des bases de données expérimentales pour éprouver la robustesse de la proposition.

\shorthandoff:
\small

\end{document}